\newcommand{\maps}{\colon}
\newcommand{\R}{\mathbb{R}}
\newcommand{\C}{\mathbb{C}}
\newcommand{\ip}[2]{\langle #1, #2 \rangle}
\DeclareMathOperator{\HilbI}{\mathbf{Hilb_I}}
\DeclareMathOperator{\HilbE}{\mathbf{Hilb_E}}
\DeclareMathOperator{\SympI}{\mathbf{Symp_I}}
\DeclareMathOperator{\SympE}{\mathbf{Symp_E}}
\DeclareMathOperator{\Hom}{Hom}
\theoremstyle{plain}
\newtheorem{prop}{Proposition}
\theoremstyle{definition}
\newtheorem{defn}{Definition}
\newtheorem{exmp}{Example}
\title{Limitations on Cloning in Classical Mechanics}
\date{}
\author{Aaron Fenyes\\
Department of Mathematics\\
University of Texas at Austin\\
{\tt afenyes@math.utexas.edu}}
\begin{document}
\maketitle
\begin{abstract}
In this paper, we show that a result precisely analogous to the traditional quantum no-cloning theorem holds in classical mechanics. This classical no-cloning theorem does not prohibit classical cloning, we argue, because it is based on a too-restrictive definition of cloning. Using a less popular, more inclusive definition of cloning, we give examples of classical cloning processes. We also prove that a cloning machine must be at least as complicated as the object it is supposed to clone.
\end{abstract}
\maketitle
\section{Introduction}
In the three decades since its discovery by Wootters and Zurek, the no-cloning principle has come to be regarded as one of the most important basic results in quantum mechanics. Its mathematical formulations, its consequences, and its limitations have been studied extensively \cite{qc-review}.

Interest in cloning has also spread to fields outside quantum mechanics. Daffertshofer, Plastino, and Plastino, for example, proved a no-cloning theorem in classical statistical mechanics \cite{daff-plas-plas}, although its interpretation has been the subject of some dispute \cite{walker}. Abramsky, taking a more abstract approach, proved a no-cloning theorem that applies to any physical theory whose ``native category'' is compact closed \cite{abramsky}.

Category theory is a good setting for the study of cloning because it provides a very general way to talk about the properties of composite systems, which control, to a large extent, the possibility or impossibility of cloning. In the categories typically used as settings for quantum mechanics, statistical mechanics, and classical mechanics, composition of physical systems is represented by a ``monoid operation'' that obeys certain rules. In the category of Hilbert spaces and linear isometries---a natural setting for quantum mechanics---the monoid operation is just the usual tensor product of Hilbert spaces. The quantum no-cloning principle is intuitively related to the fact that this monoid operation is {\em non-cartesian}---that is, it does not act like the familiar cartesian product of sets \cite{baez-notes}.

In the category of Poisson manifolds and Poisson maps---a natural setting for classical mechanics---the monoid operation combines the Poisson structures of two manifolds into a new Poisson structure on the cartesian product of the manifolds. Baez recently pointed out that although this monoid operation is cartesian when one forgets about the manifolds' Poisson structures, it is non-cartesian when the Poisson structures are taken into account. This observation led Baez to suggest that some kind of no-cloning principle might exist in classical mechanics \cite{baez-notes}. In the category of symplectic manifolds \cite{symplectic} and symplectic maps (a slightly less general setting for classical mechanics), this counterintuitive conjecture turns out to be correct: in Section~\ref{problem}, we present a true statement about symplectic mechanics that is clearly analogous to a popular formulation of the quantum no-cloning principle.

How does one reconcile this classical no-cloning result with the deep-seated intuition that it is always possible to clone an unknown classical state? We argue in Section~\ref{cause} that the classical no-cloning result from Section~\ref{problem} does not actually prohibit classical cloning, because it is based on a definition of cloning that does not allow a cloning process to involve any physical system except the object to be cloned and the raw material for the clone. Under the more inclusive definition of cloning used in \cite{daff-plas-plas} and \cite[Section~I\thinspace A]{qc-review}, quantum cloning remains impossible, but classical cloning becomes possible for many systems, as discussed in Sections \ref{eq.nc.subs} and \ref{cloning-examples}.

The definition of cloning used in \cite{daff-plas-plas} and \cite[Section~I\thinspace A]{qc-review} allows a cloning process to involve a physical system other than the object to be cloned and the raw material for the clone---that is, a cloning machine. In Section~\ref{size.sec}, we show that the dimension of the phase space of a cloning machine must be at least as great as the dimension of the phase space of the object to be cloned.

With the course of the paper now laid out, let us go back and see in more detail how Baez's classical no-cloning conjecture is confirmed.
\section{A problematic result, and its cause}
\subsection{The problem: classical no-cloning}\label{problem}
In quantum mechanics, the following result is often said to prohibit the cloning of unknown states:
\begin{prop}[Traditional no-cloning theorem---quantum]\label{tq.nc}
Let $H$ be a complex Hilbert space with dimension greater than zero. There cannot exist a unit vector $\beta \in H$ and a unitary map\footnote{This proposition remains true even if $U$ is only required to be a linear isometry (see Definition~\ref{lin.iso}).} $U \maps H \otimes H \to H \otimes H$ such that $U(\psi \otimes \beta) = \psi \otimes \psi$ for all unit vectors $\psi \in H$.
\end{prop}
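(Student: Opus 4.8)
The plan is to exploit the defining property of a unitary map (and, for the strengthening in the footnote, of a linear isometry): it preserves the inner product. First I would suppose, toward a contradiction, that a unit vector $\beta \in H$ and a unitary $U \maps H \otimes H \to H \otimes H$ existed with $U(\psi \otimes \beta) = \psi \otimes \psi$ for every unit vector $\psi$. The key idea is to feed two arbitrary unit vectors $\psi$ and $\phi$ through the cloning relation and compare inner products before and after applying $U$. Since $U$ preserves inner products, $\ip{U(\psi \otimes \beta)}{U(\phi \otimes \beta)} = \ip{\psi \otimes \beta}{\phi \otimes \beta}$.

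Next I would evaluate both sides using the fact that the inner product on a tensor product factors as a product of inner products, $\ip{a \otimes b}{c \otimes d} = \ip{a}{c}\,\ip{b}{d}$. The left-hand side becomes $\ip{\psi \otimes \psi}{\phi \otimes \phi} = \ip{\psi}{\phi}^2$, while the right-hand side becomes $\ip{\psi}{\phi}\,\ip{\beta}{\beta} = \ip{\psi}{\phi}$, using that $\beta$ is a unit vector. Writing $z = \ip{\psi}{\phi}$, this forces the functional equation $z = z^2$, hence $z(z-1) = 0$, for every pair of unit vectors; so $\ip{\psi}{\phi} \in \{0,1\}$ always.

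The final step is to exhibit unit vectors whose inner product escapes $\{0,1\}$, contradicting the equation. The cleanest witness, valid in every nonzero dimension, is to take $\phi = -\psi$ for any single unit vector $\psi$: then $z = \ip{\psi}{-\psi} = -1$, and $-1 \neq (-1)^2$. (If one prefers genuinely distinct states, then in dimension at least two one may instead take orthonormal $e_1, e_2$ and compare $e_1$ with $(e_1 + e_2)/\sqrt{2}$, whose inner product is $1/\sqrt{2}$.)

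I expect the crux to be conceptual rather than computational: it is the multiplicativity of the inner product under $\otimes$ that drives the collapse to $z = z^2$, and this is precisely the ``non-cartesian'' feature the introduction identifies as responsible for no-cloning; everything else is bookkeeping. The only spot demanding a moment's care is checking that the contradiction survives in dimension one, where orthogonal witnesses are unavailable but the phase choice $\phi = -\psi$ still does the job. Finally, since a linear isometry also preserves the inner product (by polarization) and no step uses surjectivity of $U$, the identical argument yields the strengthening noted in the footnote.
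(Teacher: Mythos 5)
Your proof is correct: unitarity plus the multiplicativity of the inner product under $\otimes$ forces $z = z^2$ for $z = \ip{\psi}{\phi}$, and the witness $\phi = -\psi$ kills every nonzero dimension. It is worth knowing how this sits relative to the paper, which never actually proves Proposition~\ref{tq.nc} --- it is quoted as the familiar traditional theorem, its classical analogue is what gets proved (by a dimension-counting argument via Proposition~\ref{size}), and the only quantum no-cloning proof given is for the \emph{expanded} definition, in Section~\ref{eq.nc.subs}. That proof runs on the same engine as yours (preservation of inner products, factorization over tensor products), but it assumes $\dim H > 1$, picks $\psi, \tilde{\psi}$ with $0 < |\ip{\psi}{\tilde{\psi}}| < 1$, and derives its contradiction from Cauchy--Schwarz applied to the machine states $f(\psi)$, $f(\tilde{\psi})$. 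That hypothesis is not removable there: a one-dimensional $H$ \emph{does} admit an expanded cloning process, with the machine absorbing the surplus phase via $f(\psi) = \bar{\psi}$. Your statement asserts impossibility for all $\dim H > 0$, and your phase witness is exactly what closes the one-dimensional case, where no two unit vectors have inner product of modulus strictly between $0$ and $1$; so your argument is not a mere specialization of the paper's expanded theorem, and indeed covers a case that theorem cannot. Two minor notes: the paper's Definition~\ref{lin.iso} already \emph{defines} a linear isometry as inner-product preserving, so your appeal to polarization is unnecessary (though harmless); and your witness also gives a one-line contradiction from linearity alone, since $U((-\psi) \otimes \beta)$ must equal both $-(\psi \otimes \psi)$ and $(-\psi) \otimes (-\psi) = \psi \otimes \psi$.
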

This result, however, cannot be advertised as a major difference between quantum and classical mechanics, because a precisely analogous result\footnote{The analogy will be made precise in Section~\ref{cat-cloning}.} holds in classical mechanics.
\begin{prop}[Traditional no-cloning theorem---classical]\label{tc.nc}
Let $M$ be a symplectic manifold with dimension greater than zero. There cannot exist a point $b \in M$ and a symplectomorphism\footnote{This proposition remains true even if $\phi$ is only required to be a symplectic map (see Definition~\ref{symp.map}).} $\phi \maps M \times M \to M \times M$ such that $\phi(x, b) = (x, x)$ for all $x \in M$.
\end{prop}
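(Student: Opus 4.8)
The plan is to exploit the only structure a symplectic map is required to preserve---the symplectic form---and to show that the cloning equation $\phi(x,b)=(x,x)$ is incompatible with that preservation. Write $\omega$ for the symplectic form on $M$, and equip $M \times M$ with its product symplectic form $\Omega = \pi_1^*\omega + \pi_2^*\omega$, where $\pi_1,\pi_2 \maps M \times M \to M$ are the two projections. Introduce two embeddings of $M$ into $M \times M$: the insertion $\iota_b \maps x \mapsto (x,b)$ and the diagonal $\Delta \maps x \mapsto (x,x)$. The hypothesis $\phi(x,b)=(x,x)$ is precisely the statement that $\phi \circ \iota_b = \Delta$.

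First I would pull the form $\Omega$ back along both sides of this identity. Because a symplectic map satisfies $\phi^*\Omega = \Omega$ by definition---so the argument will cover the weaker hypothesis of the footnote, not just symplectomorphisms---contravariance of pullback gives $\iota_b^*\Omega = \iota_b^*\phi^*\Omega = (\phi\circ\iota_b)^*\Omega = \Delta^*\Omega$. The next step is the short computation of the two sides. Since $\pi_1\circ\iota_b = \id_M$ while $\pi_2\circ\iota_b$ is the constant map at $b$, whose pullback of the $2$-form $\omega$ vanishes, we get $\iota_b^*\Omega = \omega$. Since $\pi_1\circ\Delta = \pi_2\circ\Delta = \id_M$, we get $\Delta^*\Omega = \omega + \omega = 2\omega$.

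Combining these yields $\omega = 2\omega$, hence $\omega = 0$ identically on $M$. But a symplectic form is by definition non-degenerate, so it is nowhere zero on a manifold of positive dimension; since $\dim M > 0$, this is a contradiction, and no such $b$ and $\phi$ can exist.

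I do not expect a genuine obstacle here: once the product symplectic form is written down, the whole argument is a two-line pullback computation, and the only thing to get right is the bookkeeping that produces the factor of $2$---the clone must carry the symplectic areas of \emph{both} copies of $M$, whereas the raw input $(x,b)$ supplies only one. This doubling is the classical counterpart of the squaring of inner products $\ip{\psi}{\psi'} \mapsto \ip{\psi}{\psi'}^2$ that drives the quantum proof of Proposition~\ref{tq.nc}, and making that parallel explicit is what renders the result a genuine analogue rather than a coincidence.
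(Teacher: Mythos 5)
Your proof is correct, and at its core it is the same computation the paper uses, just packaged differently. The paper disposes of this proposition by citing Proposition~\ref{size} (the machine-size bound) with $\dim N = 0$; the substance of that proof is pointwise: differentiate the cloning identity to get $d\phi_{(x,b)}(v,0) = (v,v)$, apply preservation of the product form, and read off $\omega_x(v,\tilde{v}) = 2\,\omega_x(v,\tilde{v})$, so $\omega_x = 0$, contradicting non-degeneracy. Your identity $\iota_b^*\Omega = \Delta^*\Omega$ is exactly the coordinate-free packaging of that same step---the ``factor of $2$'' doubling is identical---so nothing new is happening at the level of the key mechanism. What differs is the route: your argument is direct and self-contained, whereas the paper's inherits the statement from a strictly more general result. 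Each has a payoff. Yours is arguably cleaner for this single proposition, and (as you note) it visibly uses only $\phi^*\Omega = \Omega$, so the footnote's weaker hypothesis is covered for free (the paper's proof has this feature too, though less explicitly). The paper's detour buys Proposition~\ref{size} itself, which is where the physical content about cloning machines lives. It is worth observing that your pullback formulation generalizes with no extra effort: replacing $\Delta$ by the map $x \mapsto (x,x,f(x))$ into $M \times M \times N$ yields $\omega = 2\omega + f^*\sigma$, i.e.\ $f^*\sigma = -\omega$, and non-degeneracy of $\omega$ then forces $df_x$ to be injective, recovering $\dim N \geq \dim M$. So your method is not merely adequate for the special case; it reproduces the paper's general theorem in the same breath.
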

\begin{proof}
This result is a special case of Proposition~\ref{size} (in Section~\ref{size.sec}). It can be obtained by setting the dimension of $N$ in Proposition~\ref{size} to zero.
\end{proof}
Intuition tells us (through its spokesperson, Simon Saunders) that it is easy to clone an unknown classical state---all we have to do is measure the state, and then make a copy of it. Proposition~\ref{tc.nc}, on the other hand, seems to say that we {\em cannot} clone an unknown classical state. What is responsible for this discrepancy?
\subsection{The cause: our definition of cloning}\label{cause}
Implicit in our interpretation of Propositions \ref{tq.nc} and \ref{tc.nc} is a definition\footnote{There appear to be {\em two} definitions here, but they are both special cases of a single, more general definition, as explained in Section~\ref{cat-cloning}.} of cloning:
\begin{defn}[Traditional definition of cloning---quantum]\label{tq.def}
Let $H$ be a complex Hilbert space. A {\em cloning process} for $H$ consists of a unit vector $\beta \in H$ and a unitary map\footnote{The propositions that depend on this definition remain true even if $U$ is only required to be a linear isometry (see Definition~\ref{lin.iso}).} $U \maps H \otimes H \to H \otimes H$ such that $U(\psi \otimes \beta) = \psi \otimes \psi$ for all unit vectors $\psi \in H$.
\end{defn}
\begin{defn}[Traditional definition of cloning---classical]\label{tc.def}
Let $M$ be a symplectic manifold. A {\em cloning process} for $M$ consists of a point $b \in M$ and a symplectomorphism\footnote{The propositions that depend on this definition remain true even if $\phi$ is only required to be a symplectic map (see Definition~\ref{symp.map}).} $\phi \maps M \times M \to M \times M$ such that $\phi(x, b) = (x, x)$ for all $x \in M$.
\end{defn}
The confusion at the end of the previous section arose because this definition is too restrictive, excluding processes that most people would intuitively classify as cloning.

To understand why, let us see how a cloning process of the kind described in Definition~\ref{tc.def} would act on a familiar system. Suppose $M$ is the phase space of a set of wooden blocks, and the point $b \in M$ represents a situation in which the blocks are laid out nicely on the floor. We have built a tower out of the blocks; this state of affairs is represented by the point $x \in M$. To get a copy of the tower, we bring in a second set of blocks and lay it out in state $b$ next to the tower. The phase space of the two sets of blocks is $M \times M$, and our current situation is represented by the point $(x, b) \in M \times M$. We now wait for ten seconds as the blocks interact with each other; the evolution of the system over this period is described by the Poisson map $\phi$. At the end of the ten seconds, the system is in state $(x, x)$; the blocks have arranged themselves into a pair of towers identical to the tower we started with.

In this story, the cloning process involved only the object to be cloned and the raw material for the clone, interacting according to the laws of nature. Few people, however, would try to clone a tower of blocks that way. A more practical approach would be to introduce a third system: a machine equipped with sensors and manipulators. The machine would start out in a special ``ready'' state; allowed to evolve according to the laws of nature, it would use its sensors to determine the configuration of the tower, and then use its manipulators to build a copy out of the second set of blocks. Crucially, the machine would not be expected to end up back in the ready state; in fact, under most circumstances, it would be physically impossible for the machine to end up back in the ready state. Suppose, for example, the machine were powered by a clock spring. If the center-of-mass of the tower were higher than the center-of-mass of the blocks laid out in state $b$, the spring would have to wind down in order to lift the blocks.
\section{An expanded definition, and its consequences}
\subsection{An expanded definition of cloning}
Instead of the traditional definition of cloning, let us work with the definition used in \cite{daff-plas-plas} and \cite[Section~I\thinspace A]{qc-review}, which encompasses processes like the one described in the previous paragraph.
\begin{defn}[Expanded definition of cloning---quantum]\label{eq.def}
Let $H$ be a complex Hilbert space. A {\em cloning process} for $H$ consists of a unit vector $\beta \in H$, a complex Hilbert space $K$, a unit vector $\rho \in K$, and a unitary map\footnote{The propositions that depend on this definition remain true even if $U$ is only required to be a linear isometry (see Definition~\ref{lin.iso}).} $U \maps H \otimes H \otimes K \to H \otimes H \otimes K$ such that $U(\psi \otimes \beta \otimes \rho) = \psi \otimes \psi \otimes f(\psi)$ for all unit vectors $\psi \in H$, where $f$ is some function from the unit sphere of $H$ to the unit sphere of $K$.
\end{defn}
\begin{defn}[Expanded definition of cloning---classical]\label{ec.def}
Let $M$ be a symplectic manifold. A {\em cloning process} for $M$ consists of a point $b \in M$, a symplectic manifold $N$, a point $r \in N$, and a symplectomorphism\footnote{The propositions that depend on this definition remain true even if $\phi$ is only required to be a symplectic map (see Definition~\ref{symp.map}).} $\phi \maps M \times M \times N \to M \times M \times N$ such that $\phi(x, b, r) = (x, x, f(x))$ for all $x \in M$, where $f \maps M \to N$ is some function.
\end{defn}
\subsection{Quantum no-cloning}\label{eq.nc.subs}
Under the expanded definition of cloning from \cite{daff-plas-plas} and \cite[Section~I\thinspace A]{qc-review}, quantum cloning is still impossible for all but the most trivial systems.
\begin{prop}[{\cite[Section~I\thinspace A]{qc-review}}]
Let $H$ be a complex Hilbert space with dimension greater than one. There cannot exist a cloning process for $H$ of the kind described in Definition~\ref{eq.def}.
\end{prop}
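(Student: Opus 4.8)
The plan is to argue by contradiction, exploiting the single fact that a linear isometry preserves inner products (this is all the hypothesis on $U$ we will need, which is why, as the footnote promises, unitarity can be weakened to the isometry condition). Suppose a cloning process as in Definition~\ref{eq.def} existed, furnished with a unit vector $\beta \in H$, an auxiliary space $K$, a unit vector $\rho \in K$, an isometry $U$, and a function $f$ from the unit sphere of $H$ to the unit sphere of $K$. Fix two unit vectors $\psi, \phi \in H$ and compare the inner product of the two inputs $\psi \otimes \beta \otimes \rho$ and $\phi \otimes \beta \otimes \rho$ with the inner product of their images under $U$. On the input side, since $\beta$ and $\rho$ are unit vectors, the inner product factors as $\ip{\psi}{\phi}\,\ip{\beta}{\beta}\,\ip{\rho}{\rho} = \ip{\psi}{\phi}$. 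On the output side, the images are $\psi \otimes \psi \otimes f(\psi)$ and $\phi \otimes \phi \otimes f(\phi)$, whose inner product is $\ip{\psi}{\phi}^2\,\ip{f(\psi)}{f(\phi)}$.

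Because $U$ is an isometry, these two quantities coincide, yielding the functional equation $\ip{\psi}{\phi} = \ip{\psi}{\phi}^2\,\ip{f(\psi)}{f(\phi)}$ for every pair of unit vectors $\psi,\phi$. Writing $a = \ip{\psi}{\phi}$ and $c = \ip{f(\psi)}{f(\phi)}$, this reads $a = a^2 c$, so $a(1 - ac) = 0$, and hence either $a = 0$ or $ac = 1$. In the latter case $|a|\,|c| = 1$; but $|a| \le 1$ and $|c| \le 1$ by the Cauchy--Schwarz inequality (both $\psi,\phi$ and $f(\psi),f(\phi)$ being unit vectors), which forces $|a| = 1$. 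By the equality case of Cauchy--Schwarz, $|a| = 1$ means $\phi$ is a scalar multiple of $\psi$. The key dichotomy is therefore: for \emph{any} two unit vectors, $\ip{\psi}{\phi}$ is either $0$ or a number of modulus one, so $\psi$ and $\phi$ must be either orthogonal or scalar multiples of one another.

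To finish, I would exhibit a pair of unit vectors violating this dichotomy, and this is the only place the hypothesis $\dim H > 1$ is used. Since $\dim H > 1$, there exist orthonormal vectors $e_1, e_2 \in H$; taking $\psi = e_1$ and $\phi = (e_1 + e_2)/\sqrt{2}$ gives $\ip{\psi}{\phi} = 1/\sqrt{2}$, which is neither $0$ nor of modulus one. This contradicts the dichotomy, so no cloning process for $H$ can exist.

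I do not anticipate any serious obstacle: the argument is short and uses only the inner-product-preserving property of $U$. The one subtlety worth a remark in the write-up is that the case $ac = 1$ genuinely occurs---precisely when $\phi = e^{i\theta}\psi$, so that cloning agrees with the functional equation on parallel inputs---which is why the contradiction must be extracted from a linearly independent, non-orthogonal pair rather than from the functional equation alone. A second point worth stating explicitly is that the presence of the auxiliary system $(K, \rho, f)$ in the expanded definition adds nothing: the extra factor $\ip{f(\psi)}{f(\phi)}$ has modulus at most one, so it can never rescue a non-parallel, non-orthogonal overlap, which is exactly why quantum cloning remains impossible even under Definition~\ref{eq.def}.
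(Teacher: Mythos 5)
Your proof is correct and follows essentially the same route as the paper's: both apply the isometry property to two inputs, factor the tensor inner products to get $\ip{\psi}{\phi} = \ip{\psi}{\phi}^2\ip{f(\psi)}{f(\phi)}$, and contradict Cauchy--Schwarz using a pair of unit vectors whose overlap has modulus strictly between $0$ and $1$. The only difference is organizational---the paper fixes such a pair at the outset and divides by the nonzero overlap, whereas you first derive the ``orthogonal or parallel'' dichotomy for all pairs and then exhibit a violating pair---but the key facts and the use of $\dim H > 1$ are identical.
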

\begin{proof}
Suppose there does exist such a cloning process. Since $\dim H > 1$, we can find two unit vectors $\psi, \tilde{\psi} \in H$ such that $0 < |\ip{\psi}{\tilde{\psi}}| < 1$. Since $U$ is unitary,
\begin{align}
\ip{\psi \otimes \beta \otimes \rho}{\tilde{\psi} \otimes \beta \otimes \rho} & = \ip{\psi \otimes \psi \otimes f(\psi)}{\tilde{\psi} \otimes \tilde{\psi} \otimes f(\tilde{\psi})} \notag\\
\ip{\psi}{\tilde{\psi}}\ip{\beta}{\beta}\ip{\rho}{\rho} & = \ip{\psi}{\tilde{\psi}}\ip{\psi}{\tilde{\psi}}\ip{f(\psi)}{f(\tilde{\psi})} \notag\\
1 & = \ip{\psi}{\tilde{\psi}}\ip{f(\psi)}{f(\tilde{\psi})}.\label{eq.contra}
\end{align}
Since $|\ip{\psi}{\tilde{\psi}}| < 1$, Equation~\ref{eq.contra} implies that $|\ip{f(\psi)}{f(\tilde{\psi})}| > 1$. Since $f(\psi)$ and $f(\tilde{\psi})$ are unit vectors, this contradicts the Cauchy-Schwarz inequality.
\end{proof}
\subsection{Some classical cloning processes}\label{cloning-examples}
The expanded definition of cloning from \cite{daff-plas-plas} and \cite[Section~I\thinspace A]{qc-review}, unlike the traditional definition, allows the existence of cloning processes for many (and perhaps all) classical systems. The easiest way to prove this is to give some examples.
\begin{exmp}\label{basic}
Let $M$ and $N$ be $\R^2$ equipped with the standard symplectic form
\[ \omega(x, \tilde{x}) = x \cdot \left[\begin{array}{rr} 0 & 1 \\ -1 & 0 \end{array}\right] \tilde{x}, \]
where $\cdot$ is the usual inner product on $\R^2$. Let $b \in M$ and $r \in N$ be zero, and let $\phi \maps M \times M \times N \to M \times M \times N$ be the map
\[ \phi(x_1, x_2, x_3) = \left[\begin{array}{rr|rr|rr}
1 & 0 & 1 & 0 & 0 & 0 \\
0 & 1 & 0 & 0 & 0 & -1 \\
\hline
1 & 0 & -1 & 0 & 1 & 0 \\
0 & 1 & 0 & -1 & 0 & -1 \\
\hline
1 & 0 & 0 & 0 & 1 & 0 \\
0 & -1 & 0 & 1 & 0 & 2
\end{array}\right]
\left[\begin{array}{r}
\multirow{2}{*}{$x_1$} \\ \vphantom{0} \\
\hline
\multirow{2}{*}{$x_2$} \\ \vphantom{0} \\
\hline
\multirow{2}{*}{$x_3$} \\ \vphantom{0}
\end{array}\right]. \]
This gives a cloning process on $M$ of the kind described in Definition~\ref{ec.def}.
\end{exmp}
\begin{proof}
It is clear that $\phi(x, b, r) = (x, x, Fx)$, where
\[ F = \left[\begin{array}{rr} 1 & 0 \\ 0 & -1 \end{array}\right]. \]
To prove we have a cloning process, all we have to do is show that $\phi$ is a Poisson map. Since $M \times M \times N$ is a symplectic vector space, and $\phi$ is linear, it is enough to show that $\phi$ preserves the symplectic structure of $M \times M \times N$ at $(0, 0, 0)$.

The symplectic form on $M \times M \times N$ is
\begin{align}
\xi[(x_1, x_2, x_3), (\tilde{x}_1, \tilde{x}_2, \tilde{x}_3)] & = \omega(x_1, \tilde{x}_1) + \omega(x_2, \tilde{x}_2) + \omega(x_3, \tilde{x}_3) \notag\\
& = \left[\begin{array}{r}
\multirow{2}{*}{$x_1$} \\ \vphantom{0} \\
\hline
\multirow{2}{*}{$x_2$} \\ \vphantom{0} \\
\hline
\multirow{2}{*}{$x_3$} \\ \vphantom{0}
\end{array}\right] \cdot
\left[\begin{array}{rr|rr|rr}
0 & 1 & 0 & 0 & 0 & 0 \\
-1 & 0 & 0 & 0 & 0 & 0 \\
\hline
0 & 0 & 0 & 1 & 0 & 0 \\
0 & 0 & -1 & 0 & 0 & 0 \\
\hline
0 & 0 & 0 & 0 & 0 & 1 \\
0 & 0 & 0 & 0 & -1 & 0
\end{array}\right]
\left[\begin{array}{r}
\multirow{2}{*}{$\tilde{x}_1$} \\ \vphantom{0} \\
\hline
\multirow{2}{*}{$\tilde{x}_2$} \\ \vphantom{0} \\
\hline
\multirow{2}{*}{$\tilde{x}_3$} \\ \vphantom{0}
\end{array}\right].\label{giant}
\end{align}
For compactness, we will refer to the giant matrix in Equation~\ref{giant} as $\Xi$.

The linear map $\phi$ is a symplectomorphism if and only if the (equivalent) conditions
\begin{align*}
\xi[d\phi(x_1, x_2, x_3), d\phi(\tilde{x}_1, \tilde{x}_2, \tilde{x}_3)] & = \xi[(x_1, x_2, x_3), (\tilde{x}_1, \tilde{x}_2, \tilde{x}_3)] \\
\phi(x_1, x_2, x_3) \cdot \Xi \phi(\tilde{x}_1, \tilde{x}_2, \tilde{x}_3) & = (x_1, x_2, x_3) \cdot \Xi (\tilde{x}_1, \tilde{x}_2, \tilde{x}_3) \\
(x_1, x_2, x_3) \cdot \phi^\top \Xi \phi(\tilde{x}_1, \tilde{x}_2, \tilde{x}_3) & = (x_1, x_2, x_3) \cdot \Xi (\tilde{x}_1, \tilde{x}_2, \tilde{x}_3)
\end{align*}
hold for all $(x_1, x_2, x_3)$ and $(\tilde{x}_1, \tilde{x}_2, \tilde{x}_3)$ in $M \times M \times N$. (Here, $\phi^\top$ denotes the matrix transpose of $\phi$; as a linear map, $\phi^\top$ is the adjoint of $\phi$ with respect to the inner product $\cdot$.) It is easy to verify that $\phi^\top \Xi \phi = \Xi$, so the last condition does always hold. Therefore, $\phi$ is a symplectomorphism, giving us a cloning process on $M$.
\end{proof}
Now that we have one example of a classical cloning process, we can use the following proposition to construct more.
\begin{prop}\label{products}
If we have a cloning process for the symplectic manifold $M_1$ and a cloning process for the symplectic manifold $M_2$, we can explicitly construct a cloning process for $M_1 \times M_2$.
\end{prop}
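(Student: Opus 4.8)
The plan is to run the two given cloning processes ``in parallel,'' after reshuffling the factors of the phase space so that each process acts only on its own system. Suppose the cloning process for $M_1$ consists of a point $b_1 \in M_1$, a symplectic manifold $N_1$, a point $r_1 \in N_1$, and a symplectomorphism $\phi_1 \maps M_1 \times M_1 \times N_1 \to M_1 \times M_1 \times N_1$ with $\phi_1(x, b_1, r_1) = (x, x, f_1(x))$, and similarly that the cloning process for $M_2$ consists of $b_2$, $N_2$, $r_2$, and $\phi_2$ with $\phi_2(y, b_2, r_2) = (y, y, f_2(y))$. I would assemble a cloning process for $M_1 \times M_2$ by taking $b = (b_1, b_2)$, $N = N_1 \times N_2$, and $r = (r_1, r_2)$.

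The key device is a permutation isomorphism $\sigma$ from $(M_1 \times M_2) \times (M_1 \times M_2) \times (N_1 \times N_2)$ to $(M_1 \times M_1 \times N_1) \times (M_2 \times M_2 \times N_2)$ that sends $((x_1, y_1), (x_2, y_2), (n_1, n_2))$ to $((x_1, x_2, n_1), (y_1, y_2, n_2))$, simply regrouping the $M_1$- and $N_1$-factors into one block and the $M_2$- and $N_2$-factors into another. Since $\sigma$ merely reorders the symplectic factors of a product, it is a symplectomorphism. I would then define the candidate cloning map to be
\[ \phi = \sigma^{-1} \circ (\phi_1 \times \phi_2) \circ \sigma. \]

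Verifying this amounts to tracing a point through the composition. Feeding in $((x, y), b, r) = ((x, y), (b_1, b_2), (r_1, r_2))$, the map $\sigma$ produces $((x, b_1, r_1), (y, b_2, r_2))$; applying $\phi_1 \times \phi_2$ and invoking the two cloning conditions gives $((x, x, f_1(x)), (y, y, f_2(y)))$; and $\sigma^{-1}$ regroups this into $((x, y), (x, y), (f_1(x), f_2(y)))$. Thus $\phi((x, y), b, r) = ((x, y), (x, y), f(x, y))$ with $f(x, y) = (f_1(x), f_2(y))$, which is exactly the cloning condition of Definition~\ref{ec.def}. Moreover $\phi$ is a symplectomorphism because it is a composition of symplectomorphisms: $\sigma$ and $\sigma^{-1}$ are, and $\phi_1 \times \phi_2$ is a symplectomorphism of the product because each factor is.

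I expect the only real subtlety to be bookkeeping: one must confirm that $\sigma$ genuinely preserves the symplectic form, which reduces to the standard fact that a product of symplectic manifolds carries the sum of the pulled-back forms and that reordering those summands leaves the total form unchanged. The inclusive Definition~\ref{ec.def} places no regularity requirement on the auxiliary function, so taking $f = (f_1, f_2)$ raises no issue. Finally, if $\phi_1$ and $\phi_2$ are only assumed to be symplectic maps rather than symplectomorphisms, the same construction applies verbatim, since a composition of symplectic maps is again symplectic.
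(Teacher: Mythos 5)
Your proof is correct and follows essentially the same route as the paper: run $\phi_1 \times \phi_2$ in parallel and transport it across the regrouping isomorphism between $(M_1 \times M_1 \times N_1) \times (M_2 \times M_2 \times N_2)$ and $(M_1 \times M_2) \times (M_1 \times M_2) \times (N_1 \times N_2)$. The only difference is that you spell out the conjugation $\sigma^{-1} \circ (\phi_1 \times \phi_2) \circ \sigma$ explicitly, whereas the paper simply appeals to ``the canonical isomorphism'' and leaves that bookkeeping to the reader.
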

\begin{proof}
Suppose that for each $i \in \{1, 2\}$ we have a point $b_i \in M_i$, a symplectic manifold $N_i$, a point $r_i \in N_i$, and a symplectomorphism $\phi_i \maps M_i \times M_i \times N_i \to M_i \times M_i \times N_i$ such that $\phi_i(x, b_i, r_i) = (x, x, f_i(x))$ for all $x \in M_i$, where $f_i \maps M_i \to N_i$ is some function. Let $\phi$ be the symplectomorphism from $(M_1 \times M_1 \times N_1) \times (M_2 \times M_2 \times N_2)$ to itself given by
\[ \phi[(x_1, \tilde{x}_1, y_1), (x_2, \tilde{x}_2, y_2)] = [\phi_1(x_1, \tilde{x}_1, y_1), \phi_2(x_2, \tilde{x}_2, y_2)], \]
and observe that $\phi[(x_1, b_1, r_1), (x_2, b_2, r_2)] = [(x_1, x_1, f_1(x_1)), (x_2, x_2, f_2(x))]$ for all $(x_1, x_2) \in M_1 \times M_2$. The canonical isomorphism between $(M_1 \times M_1 \times N_1) \times (M_2 \times M_2 \times N_2)$ and $(M_1 \times M_2) \times (M_1 \times M_2) \times (N_1 \times N_2)$ then gives us a cloning process for $M_1 \times M_2$.
\end{proof}
With the help of Proposition~\ref{products}, we can extend Example~\ref{basic} to symplectic vector spaces of any finite dimension.
\begin{exmp}\label{general}
Let $M$ be a finite-dimensional symplectic vector space. By Darboux's theorem, $M$ is isomorphic to $\bigoplus^n \R^2$ for some $n$. The $n = 0$ case is trivial. When $n > 0$, Example~\ref{basic} gives us a cloning process for each copy of $\R^2$, and Proposition~\ref{products} shows us how to combine these cloning processes into a cloning process for $\bigoplus^n \R^2$.
\end{exmp}

The dynamics of a classical system can be turned into the dynamics of a statistical system by specifying the initial state of the system probabilistically instead of exactly. One might therefore try to use the classical cloning processes described here to construct a statistical cloning process that would violate the no-cloning theorem proven in \cite{daff-plas-plas}. To see where this scheme may fail (and, in light of \cite{daff-plas-plas}, must fail), observe that the cloning processes described here only work when the raw material for the clone begins in the blank state, $b$, and the cloning machine begins in the ready state, $r$. The hypotheses of \cite{daff-plas-plas} exclude singular probability distributions like the Dirac delta distribution, so there is no way to ensure that the raw material and the cloning machine start off in the right state.
\subsection{Minimum size of a classical cloning machine}\label{size.sec}
In the product manifold $M \times M \times N$ that appears in Definition~\ref{ec.def}, the first factor of $M$ should be interpreted as the phase space of the object to be cloned, the second factor of $M$ should be interpreted as the phase space of the raw material for the clone, and the the factor of $N$ should be interpreted as the phase space of any extra machinery to be used in the cloning process. With a little thought, we can deduce something interesting about the nature of this extra machinery.
\begin{prop}\label{size}
For any cloning process of the kind described in Definition~\ref{ec.def}, the dimension of $N$ must be greater than or equal to the dimension of $M$.
\end{prop}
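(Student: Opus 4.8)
The plan is to exploit the single fact that $\phi$ preserves the symplectic form, by comparing how that form restricts to two copies of $M$ sitting inside $M \times M \times N$: the ``input'' copy traced out by $x \mapsto (x, b, r)$ and the ``output'' copy traced out by $x \mapsto (x, x, f(x))$. Write $\omega_M$ and $\omega_N$ for the symplectic forms on $M$ and $N$, and $\xi = \omega_M \oplus \omega_M \oplus \omega_N$ for the form on the product. First I would introduce the two smooth maps $g \maps M \to M \times M \times N$ with $g(x) = (x, b, r)$ and $h \maps M \to M \times M \times N$ with $h(x) = (x, x, f(x))$, and observe that the cloning condition is exactly the identity $h = \phi \circ g$. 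In particular $h$ is smooth, being a composite of smooth maps; this incidentally shows that $f$, which is the $N$-component of $h$, is automatically smooth, so no regularity of $f$ need be assumed in advance.

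The crux is a pullback computation. Since $\phi$ is a symplectic map, $\phi^* \xi = \xi$, and therefore
\[ h^* \xi = (\phi \circ g)^* \xi = g^*(\phi^* \xi) = g^* \xi. \]
The right-hand side is immediate: $dg_x(v) = (v, 0, 0)$, so $g^* \xi = \omega_M$. The left-hand side uses $dh_x(v) = (v, v, df_x(v))$, which gives $h^* \xi = \omega_M + \omega_M + f^* \omega_N = 2\omega_M + f^* \omega_N$. Equating the two expressions yields
\[ f^* \omega_N = -\omega_M. \]
Because $\omega_M$ is non-degenerate, so is $f^* \omega_N$.

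Finally I would convert non-degeneracy into the dimension bound. If $df_x(v) = 0$ for some nonzero $v \in T_x M$, then $(f^* \omega_N)_x(v, w) = \omega_N(df_x v, df_x w) = 0$ for every $w \in T_x M$, contradicting the non-degeneracy of $f^* \omega_N$ just obtained. Hence $df_x$ is injective at every point, so $\dim M \le \dim N$, as claimed. Specializing to $\dim N = 0$ recovers Proposition~\ref{tc.nc}, since the bound then forces $\dim M = 0$.

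I do not expect a serious obstacle here; the one point demanding care is the bookkeeping in the pullback computation---specifically remembering that the two $M$-factors of $h$ each contribute a full copy of $\omega_M$, producing the factor of $2$ that is ultimately responsible for $f^* \omega_N$ being \emph{non-degenerate} rather than zero. I would also note that only the identity $\phi^* \xi = \xi$ is used, so nothing changes if $\phi$ is merely a symplectic map rather than a symplectomorphism, consistent with the footnoted strengthening.
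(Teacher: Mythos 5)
Your proof is correct and is essentially the paper's own argument: your identity $f^*\omega_N = -\omega_M$ is exactly the paper's Equation~\ref{size.contra} ($-\omega_x(v,\tilde v) = \sigma_{f(x)}(Fv, F\tilde v)$), obtained there by the same differentiation of the cloning condition and the same use of $\phi$ preserving $\xi$, with the pullback notation being only a cleaner global packaging of the paper's pointwise computation. The concluding step---a nonzero kernel vector of $df_x$ would violate non-degeneracy of $\omega_M$---is identical, and your observations about smoothness of $f$ and the symplectic-map footnote are sound refinements rather than departures.
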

This result has a clear physical interpretation: it says that in symplectic mechanics, a cloning machine must be at least as complicated as the object to be cloned.
\begin{proof}[Proof of Proposition~\ref{size}]
Pick any $x \in M$. Because
\[ \phi(\tilde{x}, b, r) = (\tilde{x}, \tilde{x}, f(\tilde{x})) \]
for all $\tilde{x} \in M$,
\[ d\phi_{(x, b, r)}(v, 0, 0) = (v, v, df_x(v)) \]
for all $v \in T_x M$. (Here, $d\phi_{(x, b, r)}$ denotes the total derivative of $\phi$ at the point $(x, b, r)$, and $df_x$ denotes the total derivative of $f$ at the point $x$.) For convenience, let us write $df_x$ as $F$. Keep in mind that $F$ is a linear map from $T_xM$ to $T_{f(x)}N$.

Let $\omega$ and $\sigma$ be the symplectic forms on $M$ and $N$ respectively, and let $\xi$ be the standard symplectic form on $M \times M \times N$, defined in terms of $\omega$ and $\sigma$ as
\[ \xi_{(x_1, x_2, x_3)}[(v_1, v_2, v_3), (\tilde{v}_1, \tilde{v}_2, \tilde{v}_3)] = \omega_{x_1}(v_1, \tilde{v}_1) + \omega_{x_2}(v_2, \tilde{v}_2) + \sigma_{x_3}(v_3, \tilde{v}_3). \]
Because $\phi$ is a symplectomorphism,
\begin{align}
\xi_{(x, b, r)}[(v, 0 ,0), (\tilde{v}, 0, 0)] & = \xi_{\phi(x, b, r)}[d\phi_{(x, b, r)}(v, 0, 0), d\phi_{(x, b, r)}(\tilde{v}, 0, 0)] \notag\\
\xi_{(x, b, r)}[(v, 0 ,0), (\tilde{v}, 0, 0)] & = \xi_{(x, x, f(x))}[(v, v, Fv), (\tilde{v}, \tilde{v}, F\tilde{v})] \notag\\
\omega_x(v, \tilde{v}) & = \omega_x(v, \tilde{v}) + \omega_x(v, \tilde{v}) + \sigma_{f(x)}(Fv, F\tilde{v}) \notag\\
-\omega_x(v, \tilde{v}) & = \sigma_{f(x)}(Fv, F\tilde{v}).\label{size.contra}
\end{align}

Recall that $F$ is a linear map from $T_xM$ to $T_{f(x)}N$. If the dimension of $N$ is less than the dimension of $M$, the kernel of $F$ is non-trivial---that is, there exists some non-zero vector $w \in T_xM$ for which $Fw = 0$. Equation~\ref{size.contra} then implies that $\omega_x(w, \tilde{v}) = 0$ for all $\tilde{v} \in T_xM$, contradicting the fact that $\omega_x$ is non-degenerate. To avoid this contradiction, the dimension of $N$ must be greater than or equal to the dimension of $M$.
\end{proof}
\section{A categorical approach to cloning}
\subsection{Cloning in symmetric monoidal categories}\label{cat-cloning}
In Section~\ref{cause}, we presented Definition~\ref{tc.def} as a classical analogue of the traditional definition of quantum cloning (Definition~\ref{tq.def}). We justified the analogy by claiming that both definitions are special cases of a single, more general definition. Before we state that more general definition, let us conduct a brief review of the category theory it is based on.

In the quantum theory of closed systems, physical systems are represented by complex Hilbert spaces, and physical transformations are represented by unitary maps.\footnote{When open systems are taken into consideration, some transformations have to be represented by non-unitary maps, leading to a more complicated categorical setting. One possible choice is the category whose objects and arrows are complex Hilbert spaces and bounded linear maps, discussed in \cite{quandaries}.} A natural mathematical setting for quantum mechanics is therefore the category whose objects and arrows are complex Hilbert spaces and unitary maps. We will refer to this category as $\HilbI$.

In classical mechanics, physical systems are represented by Poisson manifolds, and physical transformations are represented by Poisson maps. A natural mathematical setting for classical mechanics is therefore the category whose objects and arrows are Poisson manifolds and Poisson maps. In this paper, we have been working in a slightly less general setting: the category whose objects and arrows are symplectic manifolds and symplectomorphisms. We will refer to this category as $\SympI$.

In both quantum mechanics and classical mechanics, a pair of physical systems can be treated as a single composite system. The categorical consequence of this fact is that both $\HilbI$ and $\SympI$ are {\em symmetric monoidal} categories (see \cite[Section~5]{coecke}). The monoid structures of $\HilbI$ and $\SympI$ are summarized in Table~\ref{monoid}.
\begin{table}
\begin{tabular}{>{\em}l|ll}
Theory & Quantum mechanics & Classical mechanics \\
Category & $\HilbI$ & $\SympI$ \\
Objects & Complex Hilbert spaces & Symplectic manifolds \\
Arrows & Unitary maps & Symplectomorphisms \\
Monoid operation & Tensor product & Product of manifolds \\
Unit object & The complex numbers & The manifold with one point
\end{tabular}
\caption{The monoid structures of $\HilbI$ and $\SympI$, summarized.}\label{monoid}
\end{table}

The concept of ``the state of a system'' plays an important role in both quantum mechanics and classical mechanics. In order to talk about the state of a system in purely category-theoretic terms, we must enlarge the categories $\HilbI$ and $\SympI$ by adding more arrows to each one. We will begin with some definitions.
\begin{defn}\label{lin.iso}
Let $H$ and $K$ be complex Hilbert spaces. A {\em linear isometry} from $H$ to $K$ is a linear map $U \maps H \to K$ with the property that
\[ \ip{U(\psi)}{U(\tilde{\psi})} = \ip{\psi}{\tilde{\psi}} \]
for all $\psi, \tilde{\psi} \in H$.
\end{defn}
\begin{defn}\label{symp.map}
Let $M$ and $N$ be symplectic manifolds, with symplectic forms $\omega$ and $\sigma$ respectively. A {\em symplectic map} from $M$ to $N$ is a smooth map $\phi \maps M \to N$ with the property that
\[ \sigma_{\phi(x)}(d\phi_x(v), d\phi_x(\tilde{v})) = \omega_x(v, \tilde{v}) \]
for all $x \in M$ and all $v, \tilde{v} \in T_xM$.
\end{defn}
For the remainder of this section, our category-theoretic setting for quantum mechanics will be the category whose objects and arrows are complex Hilbert spaces and linear isometries, which we will refer to as $\HilbE$. Our setting for classical mechanics will be the category whose objects and arrows are symplectic manifolds and symplectic maps, which we will refer to as $\SympE$. For the sake of familiarity, all the definitions and propositions up to this point have been phrased in terms of unitary maps and symplectomorphisms, as if we were working in $\HilbI$ and $\SympI$. The propositions remain true, however, if we work in $\HilbE$ and $\SympE$ instead, reading ``unitary map'' as ``linear isometry'' and ``symplectomorphism'' as ``symplectic map.''

In both $\HilbE$ and $\SympE$, there is a natural one-to-one correspondence between states of a system and arrows from the unit object to the object representing the system. In $\HilbE$, a state is represented by a unit vector $\psi$ in a complex Hilbert space $H$, which corresponds to the unique linear isometry from $\C$ to $H$ that sends $1$ to $\psi$. In $\SympE$, a state is represented by a point $x$ in a symplectic manifold $M$, which corresponds to the unique symplectic map from the manifold with one point to $M$ that sends the one point to $x$.

We are now ready to give a ``traditional'' definition of cloning that makes sense in any symmetric monoidal category.
\begin{defn}[Traditional definition of cloning---categorical]\label{ta.def}
Let $\mathcal{C}$ be a symmetric monoidal category with monoid operation $\circledast$ and unit object $I$. A {\em cloning process} for an object $A$ of $\mathcal{C}$ consists of an arrow $\beta \maps I \to A$ and an arrow $c \maps A \circledast A \to A \circledast A$ such that the following diagram commutes for all $\psi \maps I \to A$:
\[ \xymatrix @!R=0.8in @!C=0.4in {
A \circledast A \ar[rr]^c & & A \circledast A \\
& I \circledast I \ar[ul]^{\psi \circledast \beta} \ar[ur]_{\psi \circledast \psi}
} \]
\end{defn}
When $\mathcal{C}$ is $\HilbE$, Definition~\ref{ta.def} reduces to Definition~\ref{tq.def} with the words ``unitary map'' replaced by ``linear isometry.'' When $\mathcal{C}$ is $\SympE$, Definition~\ref{ta.def} reduces to Definition~\ref{tc.def} with the word ``symplectomorphism'' replaced by ``symplectic map.''

We can also give an ``expanded'' definition of cloning that makes sense in any symmetric monoidal category.
\begin{defn}[Expanded definition of cloning---categorical]\label{ea.def}
Let $\mathcal{C}$ be a symmetric monoidal category with monoid operation $\circledast$ and unit object $I$. A {\em cloning process} for an object $A$ of $\mathcal{C}$ consists of an arrow $\beta \maps I \to A$, an object $B$ of $\mathcal{C}$, an arrow $\rho \maps I \to B$, and an arrow $c \maps A \circledast A \circledast B \to A \circledast A \circledast B$ such that the following diagram commutes for all $\psi \maps I \to A$, where $f$ is some function from $\Hom(I, A)$ to $Hom(I, B)$:
\[ \xymatrix @!R=0.8in @!C=0.4in {
A \circledast A \circledast B \ar[rr]^c & & A \circledast A \circledast B \\
& I \circledast I \circledast I \ar[ul]^{\psi \circledast \beta \circledast \rho} \ar[ur]_{\psi \circledast \psi \circledast f(\psi)}
} \]
\end{defn}
When $\mathcal{C}$ is $\HilbE$, Definition~\ref{ea.def} reduces to Definition~\ref{eq.def} with the words ``unitary map'' replaced by ``linear isometry.'' When $\mathcal{C}$ is $\SympE$, Definition~\ref{ea.def} reduces to Definition~\ref{ec.def} with the word ``symplectomorphism'' replaced by ``symplectic map.'' Notice, also, that Definition~\ref{ea.def} reduces to Definition~\ref{ta.def} when $B$ is the unit object. Consequently, any proposition about Definition~\ref{ea.def} cloning processes applies to Definition~\ref{ta.def} cloning processes as well.
\subsection{Relationship with uniform cloning}
The categorical definitions of cloning given above differ in several ways from the {\em uniform cloning} concept used in Abramsky's categorical no-cloning theorem (\cite[Section~4.1]{abramsky}). One difference is that the definition of uniform cloning makes no reference to arrows from the unit object $I$, which play the role of system states in Definitions \ref{ta.def} and \ref{ea.def}. Another difference is that uniform cloning demands the existence of cloning processes for all of the objects in a category, while Definitions \ref{ta.def} and \ref{ea.def} can be applied to individual objects. These differences make it hard to find a connection between uniform cloning and the definitions of cloning used in this paper.
\section{Directions for further research}
The results presented in this paper are rather basic, but the following questions suggest that further study of classical cloning may yield deeper results, of interest to mathematicians and physicists alike.
\subsection{Is there a cloning process for every symplectic manifold?}
Example~\ref{general} in Section~\ref{cloning-examples} gives an example of a cloning process for every symplectic vector space. Reading this, Bruno Le Floch asked whether there is a cloning process for every symplectic manifold. If not, what are the obstructions to the existence of a cloning process?
\subsection{What are the limitations on cloning in Poisson mechanics?}
Most of the classical systems and dynamics of interest to physicists can be described in the category of symplectic manifolds and symplectic maps. It is often useful, however, to study classical mechanics in the more general category of Poisson manifolds and Poisson maps. One difference between symplectic manifolds and Poisson manifolds is that a symplectic structure is described by a covariant tensor, while a Poisson structure is described by a contravariant tensor. Because of this minor-looking but important distinction, our proof of Proposition~\ref{size} (which says that a cloning machine must be at least as complicated as the object to be cloned) cannot be straightforwardly generalized from symplectic mechanics to Poisson mechanics. Is there a result in Poisson mechanics analogous to Proposition~\ref{size}?
\subsection{What can be said about classical universal constructors?}
The concept of a cloning process is closely related to von Neumann's concept of a universal constructor. A referee pointed out a recent paper \cite{constructor} that formalizes the notion of a universal constructor in quantum mechanics, and shows that a quantum universal constructor would be heavily restricted in its capabilities. What can be said about the existence and limitations of universal constructors in classical mechanics?
\section*{Acknowledgments}
Many thanks to John Baez. Without his advice and encouragement, this paper would not have been written.
\bibliographystyle{plain}
\bibliography{cloning-paper}
\end{document}